\newcommand{\deff}[1]{\textbf{#1}} %
\newcommand{\bR}{\mathbb{R}}
\newcommand{\flag}{\mathrm{Clique}}
\newcommand{\crit}{\mathrm{crit}}
\providecommand\given{}
\newcommand\SetSymbol[1][]{%
  \nonscript\:#1\vert{}
  \allowbreak{}
  \nonscript\:
  \mathopen{}}
\DeclarePairedDelimiterX\Set[1]\{\}{%
  \renewcommand\given{\SetSymbol[]}
  #1
}
\title{Filtration-Domination in Bifiltered Graphs}
\author{Ángel Javier Alonso\thanks{Graz University of Technology, Graz, Austria.
    Supported by Austrian Science Fund (FWF) grant P 33765-N.}
  \and Michael Kerber\footnotemark[1]
\and Siddharth Pritam\thanks{Shiv Nadar University, Delhi NCR, India.}}
\date{}
\begin{document}

\maketitle

\begin{abstract}
  \small\baselineskip=9pt
Bifiltered graphs are a versatile tool for modelling relations
between data points across multiple grades of a two-dimensional scale.
They are especially popular in topological data analysis, where the
homological properties of the induced clique complexes are studied.
To reduce the large size of these clique complexes, we identify
\emph{filtration-dominated} edges of the graph, whose removal
preserves the relevant topological properties. We give two algorithms
to detect filtration-dominated edges in a bifiltered graph and analyze
their complexity.
These two algorithms work directly on the bifiltered graph, without first
extracting the clique complexes, which are generally much bigger.
We present extensive experimental evaluation which shows
that in most cases, more than 90\% of the edges can be removed.
In turn, we demonstrate that this often leads to a substantial speedup,
and reduction in the memory usage, of the computational pipeline
of multiparameter topological data analysis.
\end{abstract}

\section{Introduction}

\subparagraph{Motivation and problem statement.}
A \emph{bifiltered graph} is a finite simple graph $G=(V,E)$ together with a
function $f$ that assigns to each \emph{grade} $(s,t)\in\mathbb{R}^2$ a subgraph $G_{s,t}$ of $G$. Moreover, these subgraphs are nested, that is, whenever $s\leq s'$ and
$t\leq t'$, we have $G_{s,t}$ is a subgraph of $G_{s',t'}$. Such bifiltered
graphs appear naturally in the area of multiparameter persistence, a theme
within topological data analysis that has received increasing attraction
recently. The idea is that a bifiltered graph models the relations of data across
various grades of a two-dimensional scale.
See \cref{fig:circles} for an illustration of these concepts.

Having a bifiltered graph, the pipeline of multiparameter persistence usually proceeds by deriving a bifiltered simplicial complex (a higher-dimensional analogue of a graph) by considering the \emph{clique complexes} of the subgraphs
and to study their topological properties in terms of (persistent) homology.
We postpone an explanation of these topological terms, which are not central to the results of the paper, to Section~\ref{sec:topological_motivation};
the main problems in this pipeline are that the clique complexes derived from a graph can be much larger than the size of the graph, and computing its topological properties requires algorithms that, despite many ongoing efforts, scale not too favorably with respect to the complex size.

\begin{figure*}[htb] \centering
  \includegraphics[width=\textwidth]{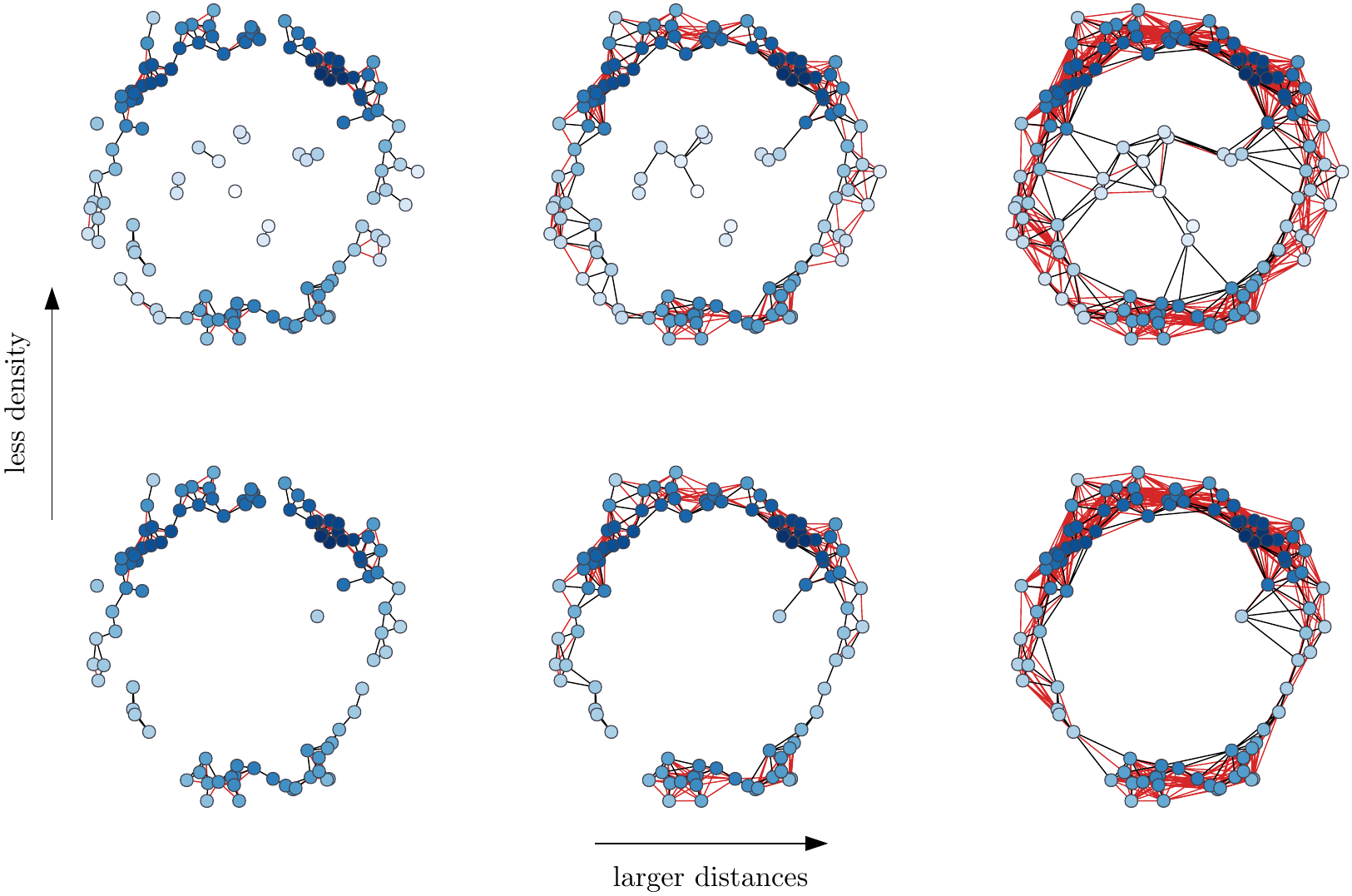}
  \caption{Illustration of a bifiltered graph. 
We sample points from a circle with noise and outliers.
We estimate a density of each point, encoded by the color (darker means denser).
Then, we define $G_{s,t}$ as the graph whose vertex set are points with density
at least $s$ and whose edges have Euclidean length at most $t$.
The figure shows six of these graphs with quite different topological configurations.
The edges colored red are filtration-dominated during the run of our
algorithm,
and can be removed without changing the topology of the graph, in a sense made precise
in~\cref{sec:topological_motivation}.}\label{fig:circles}
\end{figure*}

Therefore, we study the following problem: given a bifiltered graph $G$, our goal
is to compute another, hopefully much smaller bifiltered graph $G'$ such that
the clique complexes of $G$ and $G'$ are ``equivalent under the lens of homology''. Formally, we require that the $2$-parameter persistence modules (see Section~\ref{sec:topological_motivation}) induced by $G$ and $G'$ are isomorphic.

\subparagraph{Contribution.} Even though the formal problem statement requires topological notation, our contribution can be in large parts understood
in terms of elementary graph theory: Previous work~\cite{boissonnatEdgeCollapsePersistence} defined
the concept of \emph{dominated edges} in a graph which can be expressed and
checked combinatorially through the local neighborhood of the edge (see Section~\ref{sec:filtration_intro}). We extend this concept
to \emph{filtration-dominated edges} which are edges dominated
at all grades in a bifiltered graph. 
While filtration-domination is implicitly used in~\cite{boissonnatEdgeCollapsePersistence,glisseSwapShiftTrim2022},
our definition appears to be novel.
It is not difficult to verify, using well-known algebraic machinery, that removing filtration-dominated edges from a bifiltered graph preserves its homological properties (Theorem~\ref{thm:isomorphic}). This suggests a
greedy algorithm to decrease the size of a bifiltered graph: look for a
filtration-dominated edge, remove it, and repeat with the smaller graph.

We design and analyze efficient algorithms 
to search for filtration-dominated edges (Section~\ref{sec:algorithms}). 
We first give an algorithm for finding \emph{strongly filtration-dominated
  edges}; such edges have an extra condition compared to the non-strong definition
and can be detected efficiently in time proportional to the size of the local
neighborhood of an edge. We use this algorithm as a subroutine to
detect (non-strong) filtration-dominated edges using a simple
point-location data structure in the plane. We analyze this algorithm and show
that its runtime increases by a factor of $O(k\log k)$ compared to checking
strong filtration-domination, where $k$ is the maximal degree in the graph.

Our final contribution is an extensive experimental evaluation of our novel
concepts and algorithms on a broad collection of artificial and real-world data.
We show that in practice, only removing strongly filtration-dominated edges
yields a slightly larger output graph than removing all
filtration-dominated edges, but usually outperforms the general method in terms
of computation time. Moreover, our
greedy removal strategy can choose in what order to check edges for
(strong) filtration-domination. We show that the more removals happen when checking edges
that are added late in the bifiltered graph first. Finally, we demonstrate that
our approach speeds up the pipeline of multiparameter persistence: the usual
next step after constructing the clique complex is the computation of a
\emph{minimal presentation}~\cite{lesnickComputingMinimalPresentations2020}
which is a compact representation of the homological information of the complex.
We demonstrate that patching our algorithm before the highly optimized minimal
presentation algorithm \textsc{mpfree}~\cite{kerberFastMinimalPresentations2020}
yields speedups of more than an order of magnitude.

\subparagraph{Related work.}
Our approach is inspired by, and generalizes the line of research by Pritam et
al.~\cite{boissonnatEdgeCollapsePersistence,glisseSwapShiftTrim2022} who
considered the same problem for unifiltered graphs 
(i.e., the scale is unidimensional 
and the grades are real values). The extension to bifiltered graphs
is not entirely straightforward; most difficulties arise from the fact
that $\bR^2$ is only partially ordered. As an example,
in a unifiltered graph, checking for domination of an edge reduces to
examine the subgraphs of the grades where an edge is added; this is not true
for bifiltered graphs, because two edges can appear at incomparable grades
in $\bR^2$. 
On the other hand, our solution yields an algorithm also for the unifiltered
case by just ignoring a coordinate. Interestingly, our experiments show that our
algorithm for bifiltered graphs sometimes yields a smaller output than
the highly optimized, most recent algorithm for unifiltered
graphs~\cite{glisseSwapShiftTrim2022}, when run in the underlying unifiltered
graphs, as described in~\cref{sec:experiments}.

Persistent Homology has boosted the field of applied topology in the last
20 years. We refer to
textbooks~\cite{deyCompTop2022,edelsbrunnerCompTop2010,oudotPersistenceTheory2015} and
surveys~\cite{botnanIntroductionMultiparameterPersistence2022,carlssonPersHom2020}.
One of the most studied objects is the \emph{Vietoris-Rips filtration}, both in theory~\cite{aa-circle,chazalPersistenceStabilityGeometric2014,sheehyLinearSizeApproximationsVietoris2013}
and in
practice~\cite{Aktas2019,bleherTopologicalDataAnalysis2022,Lo2018}:
From a finite metric space we construct a complete weighted graph $G$ where the weight is simply
the distance of two points, resulting in a unifiltered graph depicted in the top row
of~\cref{fig:circles}.

The introduction of a second parameter is often motivated by the fact that
homological properties might change significantly in the presence of
outliers~\cite{buchetTopAnOutliers2015,carlssonTheoryMultidimensionalPersistence2009}. 
The natural idea is to combine the length parameter with some density measure
on the points~\cite{bmt-topological,cgos-jacm,rs-stable}.
The clique complexes of the corresponding bifiltered graph, which is depicted in \cref{fig:circles},
are called its \emph{density-Rips bifiltration}.
While we concentrate on this case in the experiments, there exist numerous
alternative constructions;
see~\cite{blumbergStability2ParameterPersistent2022,botnanIntroductionMultiparameterPersistence2022} for a comprehensive overview.

The restriction to exactly two parameters might appear rather specialized, as we
can easily extend the concept of bifiltered graphs to multifiltered graphs,
where the scale consists of grades in $\bR^d$, with $d$ fixed. This choice is made partially for
brevity and clearness of presentation; our approach can be generalized to more
parameters with some extra care that we discuss in the conclusion. Moreover, on
the computational side, the case of two parameters has been studied extensively
recently~\cite{dkm-computing,kn-efficient,kerberFastMinimalPresentations2020,lesnickInteractiveVisualization2D2015},
and our approach contributes to establish a solid algorithmic
layer for the case of bifiltered data sets.

As mentioned, the major obstacle in processing bifiltered simplicial complexes
is their sheer size. A common practice is to construct the simplicial complex
only up to a target dimension $p$
and/or up to a maximal grade. This offers a trade-off between the size
of the object to be processed and the information captured by it.
A further line of research aims for removing simplices in
bifiltered simplicial complexes in a topology-preserving way with techniques
from Discrete Morse
Theory~\cite{alliliReducingComplexesMultidimensional2017,alliliAcyclicPartialMatchings2019,scaramucciaComputingMultiparameterPersistent2020};
see also~\cite{fk-chunk}. In a similar spirit, \emph{minimal
  presentations}~\cite{fkr-compression,kerberFastMinimalPresentations2020,lesnickComputingMinimalPresentations2020}
reduce a simplicial bifiltration to an algebraic description that captures the
homological information (in a fixed dimension) in a minimal form. Common to
these techniques is that the bifiltered complex has to be \emph{expanded} before
compression, that is, all its simplices have to be enumerated. Our approach, on
the other hand, acts solely on the underlying graph and can thus be used as a
preprocessing step for all mentioned approaches in the case of clique complexes.

\section{Filtration-dominated edges}\label{sec:filtration_intro}
We use the following basic notions for a graph $G=(V,E)$:
Two vertices are \deff{adjacent} if there is an edge between them.
A \deff{subgraph} $G'=(V',E')$ of $G$ is a graph with $V'\subseteq V$
and $E'\subseteq E$. The \deff{induced subgraph} of $V'\subseteq V$
is the subgraph $G'=(V',E')$ with the largest possible $E'$.
We write $G\setminus e$ for the subgraph of $G$ with the same vertices
and edges, except the edge $e$.
A \deff{$k$-clique} in a graph is a complete subgraph with $k$ vertices.
Finally, we call a vertex $v$ of $G$ \deff{dominating} if
it is adjacent to all other vertices in the graph.

Fixing a graph $G$ and an edge $e$ of $G$ with endpoints $u$ and $v$, 
we say that $w\in G$
is an \deff{(edge) neighbor} of $e$ if $w$ is both adjacent to $u$ and to $v$.
We say that $e$ is \deff{dominated} in $G$ by a vertex $w$ if $w$ is an edge neighbor of $e$, and every other edge neighbor of $e$ is adjacent to $w$.
Equivalently, writing $N_{G}(e)$ for the set of all edge neighbors,
$e$ is dominated if $w$ is a dominating vertex in the subgraph of $G$ induced by $N_G(e)$. Note that the vertex $w$ need not to be dominating in the whole graph $G$. Also, an edge might be dominated by more than one vertex. We say that $e$ is \deff{dominated} in $G$ if it is dominated by some vertex
in $G$. 

Recall the definition of a bifiltered graph from the beginning of the Introduction. In the following, when we talk about a bifiltered graph, 
we suppress the underlying function $f$ from the notation and refer to the subgraph
at a grade $(s,t)$ as $G_{s,t}$. Note that a subgraph $G'$ of a bifiltered graph $G$ is canonically bifiltered as well by defining $G'_{s,t}:=G_{s,t}\cap G'$ (where the intersection is taken vertex- and edge-wise).

The following simple definitions is the main concept of this work:
\begin{Definition}\label{def:filtration_domination}
  Let $G$ be a bifiltered graph. An edge $e$ is
  \deff{filtration-dominated} in $G$ if for every $(s,t)\in\mathbb{R}^2$ for
  which $e$ is in $G_{s,t}$, the edge $e$ is also dominated in $G_{s,t}$.

  We say that $e$ is \deff{strongly filtration-dominated} by $v\in G$,
if for every $(s,t)\in\mathbb{R}^2$ for which $e$ is in $G_{s,t}$, the edge $e$ is
dominated by $v$ in $G_{s,t}$.
\end{Definition}
Note that if $e$ is strongly filtration-dominated (by some vertex),
it is filtration-dominated, but the converse might not hold.

The idea behind the definition is that filtration-dominated edges may be removed from the bifiltered graph without changing its relevant topological properties~-- we postpone the precise discussion to Section~\ref{sec:topological_motivation}. Assuming this fact for now, we obtain a simple greedy framework
to compress a bifiltered graph: we traverse the edges in arbitrary order, and,
if an edge $e$ is filtration-dominated, remove it from $G$, and continue
considering $G\setminus e$.

Instead of deciding whether $e$ is filtration-dominated in our framework,
we can use any other predicate as long as we guarantee that only
filtration-dominated edges are removed. In particular, we can check
whether $e$ is strongly filtration-dominated by some vertex of the graph.
This change might result in keeping some filtration-dominated edges in the output; however, as we explain in Section~\ref{sec:algorithms},
strong filtration-domination can be checked faster than filtration-domination,
so there is a trade-off between compression rate and runtime.

The order in which the greedy algorithm traverses the edges has a significant
effect on the runtime of the algorithm. As we show in Section~\ref{sec:experiments}, it is beneficial to check edges first that appear late in the bifiltered graph.

\section{Topological motivation}\label{sec:topological_motivation} 
The purpose of this section is to justify the notion of filtration-domination
from the previous section. We start with the main theorem of the section.
Its proof is relatively short and follows straightforwardly from
previous work; the bulk of this section is devoted to explain the terms
used in the theorem. While these definitions are mostly self-contained,
some machinery from (basic) algebraic topology is unavoidable.

\begin{theorem}\label{thm:isomorphic}
Let $G$ be a bifiltered graph with a filtration-dominated edge $e$,
and let $G':=G\setminus e$. Let $\flag(G)$ denote the bifiltered
clique complex induced by $G$, and $\mathcal{H}_p(\flag(G))$ denote the $2$-parameter persistence module induced by that clique complex in dimension $p$,
where $p\geq 0$ is an arbitrary integer and homology is taken over any finite field. Define $\mathcal{H}_p(\flag(G'))$ in the same way. Then,
the two defined persistence modules are isomorphic.
\end{theorem}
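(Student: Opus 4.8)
The plan is to show that the inclusion $\iota\colon G'\hookrightarrow G$ \emph{induces an isomorphism} of persistence modules, rather than merely exhibiting two abstractly isomorphic modules. This is the cleaner route because a morphism of $\bR^2$-indexed persistence modules that happens to be invertible at every single grade is automatically an isomorphism of modules: its grade-wise inverses assemble into a morphism. Thus the whole argument splits into two tasks: (i) check that $\iota$ really induces a well-defined morphism $\mathcal{H}_p(\flag(G'))\to\mathcal{H}_p(\flag(G))$, and (ii) check that this morphism is invertible at each fixed grade $(s,t)\in\bR^2$.

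For (i), recall that the subgraph $G'=G\setminus e$ is canonically bifiltered by $G'_{s,t}=G_{s,t}\cap G'=G_{s,t}\setminus e$. Taking clique complexes and then $p$-th homology are both functorial, so the inclusions $\flag(G'_{s,t})\hookrightarrow\flag(G_{s,t})$ yield linear maps $\mathcal{H}_p(\flag(G'))_{s,t}\to\mathcal{H}_p(\flag(G))_{s,t}$ that commute with the internal structure maps of both modules: for $(s,t)\leq(s',t')$ the relevant square consists entirely of inclusions of subcomplexes of $\flag(G)$, hence commutes, and functoriality transports this to the homology level. So $\iota$ induces a genuine morphism of persistence modules.

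For (ii), fix a grade $(s,t)$ and distinguish two cases. If $e\notin G_{s,t}$, then $G'_{s,t}=G_{s,t}\setminus e=G_{s,t}$, so the grade-wise map is the identity, which is trivially an isomorphism. If $e\in G_{s,t}$, then by the hypothesis that $e$ is filtration-dominated (\cref{def:filtration_domination}), $e$ is dominated in $G_{s,t}$ by some vertex $w$. Here I would invoke the single-parameter edge-collapse result of Boissonnat and Pritam~\cite{boissonnatEdgeCollapsePersistence}: domination of $e$ in a graph $H$ means the link of $e$ in $\flag(H)$ equals the clique complex on $N_H(e)$ and has $w$ as a cone apex, hence is contractible, so $\flag(H)$ collapses onto $\flag(H\setminus e)$ and the inclusion $\flag(H\setminus e)\hookrightarrow\flag(H)$ is a homotopy equivalence. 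Applying this with $H=G_{s,t}$ makes the grade-wise map on $\mathcal{H}_p$ an isomorphism. Note that the apex $w$ may depend on $(s,t)$ — precisely the difference between filtration-domination and strong filtration-domination — but since the morphism in (i) is built from inclusions alone and never refers to $w$, this grade-dependence is harmless.

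Combining (i) and (ii), the morphism induced by $\iota$ is invertible at every grade and therefore an isomorphism of persistence modules, which is the claim. I do not expect a genuine obstacle: the only points requiring care are the bookkeeping identity $G'_{s,t}=G_{s,t}\setminus e$, the correct statement and citation of the edge-collapse lemma, and the observation that a varying dominating vertex does not interfere with the morphism. All the topological substance is already packaged in the known single-parameter collapse theorem, consistent with the remark in the excerpt that the proof follows straightforwardly from previous work.
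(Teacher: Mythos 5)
Your proposal is correct and follows essentially the same route as the paper: both reduce to the single-grade edge-collapse lemma (the link of a dominated edge is a cone, so $\flag(G_{s,t}\setminus e)\hookrightarrow\flag(G_{s,t})$ induces an isomorphism on homology) and then observe that, since all grade-wise isomorphisms are induced by inclusions, they automatically commute with the structure maps and hence assemble into an isomorphism of persistence modules. Your explicit remark that the dominating vertex may vary with the grade without affecting the argument is exactly the point the paper's definition of filtration-domination (as opposed to strong filtration-domination) is designed to accommodate.
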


\subparagraph{Simplicial complexes.} A \deff{simplicial complex} $K$ is
a collection of subsets of a non-empty finite set, $V$, its \deff{vertex
  set}, that is closed under taking subsets. That is, for every $A$ in
$K$, all the subsets of $A$ are in $K$. An element of $K$ of cardinality $k+1$ is called a \deff{$k$-simplex}, and $k$ is called the \deff{dimension}
of the simplex.
A \deff{subcomplex} $L$ of $K$ is a subcollection that is a
simplicial complex itself. 
Note that a collection of $1$-simplices $E$ over $V$ simply defines a graph
$G=(V,E)$. Hence, simplicial complexes generalize graphs
to higher dimensions. 

The only type of simplicial complex that we consider in this work are
\deff{clique complexes} (also called ``flag complexes''):
given a graph $G=(V,E)$, let $\flag(G)$ be the simplicial complex
whose $k$-simplices are the $(k+1)$-cliques of $G$;
this is a simplicial complex because subsets of cliques are cliques,
and $\flag(G)$ contains $G$ because the $1$- and $2$-cliques of a graph are
its vertices and edges, respectively. 
We point out that the clique complex is generally a very large object:
in the extremal case of a complete graph $K_n$ with $n$ vertices,
$\flag(K_n)$ has $\binom{n}{k+1}$ $k$-simplices 
and hence a total number of $2^n$ simplices.

A \deff{bifiltered (simplicial) complex} is a simplicial complex $K$
with a function that assigns to each grade $(s,t)$ in $\bR^{2}$ a subcomplex $K_{s,t}$,
with the property that if $s\leq s'$ and $t\leq t'$, we have
$K_{s,t}$ is a subcomplex of $K_{s',t'}$. This definition generalizes
the notion of a bifiltered graph in the natural way. Given
a bifiltered graph $G$, its clique complex $\flag(G)$ is naturally bifiltered
by $\flag(G)_{s,t}:=\flag(G_{s,t})$, and we call it its \deff{clique bifiltration}.

\subparagraph{Persistence modules.}
From now on, we write for two grades $u = (u_1, u_2)$ and $v = (v_1, v_2)$ in $\bR^{2}$
that $u\leq v$ if  $u_1 \leq v_1$ and $u_2 \leq v_2$. Note that this
is a partial order on $\bR^2$.

A \deff{bigraded (or 2-parameter) persistence module} $M$ is a family of vector spaces $\Set{M_u}_{u\in\bR^{2}}$ together with linear maps $m_{u\to v} : M_{u} \to M_{v}$
whenever $u\leq v$, such that $m_{u\to u}$ is the identity function
and $m_{v\to w}\circ m_{u\to v}=m_{u\to w}$ whenever $u\leq v\leq w$.

The natural way to obtain a bigraded persistence module is to apply \emph{(simplicial) homology} over a fixed base field on a bifiltered simplicial complex.
We omit a formal definition of homology and only describe the idea: 
A simplicial complex $K$  can be interpreted as a topological space by embedding
its vertex set in sufficiently high dimension and drawing its simplices
as convex hulls of the embedded vertices such that no unwanted intersections happen;
for a fixed integer $p>0$, the $p$-th homology group $H_p(K)$ is a vector space that captures, informally speaking, the $p$-dimensional hole structure
of the embedding of $K$. 
One can show that $H_p(K)$ is independent of how $K$ is embedded.

Crucially for us, homology is \deff{functorial} which means in simplified terms
that for complexes $K\subseteq L$, we obtain an induced linear map
$i_{K\to L}:H_p(K)\to H_p(L)$ with the properties that $i_{K\to K}$ 
is the identity map and for complexes $K\subseteq L\subseteq M$,
we have $i_{L\to M}\circ i_{K\to L}=i_{K\to M}$.
It follows that applying $p$-dimensional homology (over a fixed base
field) on a bifiltered simplicial complex $K$ yields
a bigraded persistence module which we denote by $\mathcal{H}_p(K)$. 
A persistence module captures the homological properties of a bifiltered
simplicial complex and allows to rank the prominence of each hole in the dataset,
depending on the range of grades on which each hole is present.
That makes persistence modules (uni-, bi-, or multigraded) a central
concept of topological data analysis.

Let $\mathcal{M}$ and $\mathcal{N}$ two persistence modules with linear maps $m_{\cdot\to\cdot}$ and $n_{\cdot\to\cdot}$, respectively.
We say that $\mathcal{M}$ and $\mathcal{N}$  are \textbf{isomorphic} if there is a collection of vector space isomorphisms $\phi_{u} : \mathcal{M}_{u} \to \mathcal{N}_{u}$ for each $u \in \mathbb{R}^2$, such that $\phi_{v} \circ m_{u\to v} = n_{u\to v} \circ \phi_{u}$, for all $u \leq v$.
That means that we can switch back and forth between two isomorphic
persistence modules freely: if the persistence modules are given by
the homology of two bifiltered simplicial complexes, it means that
these complexes are indistinguishable at all grades in terms of homology.

\subparagraph{Proof of Theorem~\ref{thm:isomorphic}.}
We sketch the proof of the theorem.
All involved techniques are elementary tools from combinatorial topology,
but we cannot explain them in detail for the sake of brevity.
The major tool is the following simple property
that connects domination of edges in a graph with the homological properties
of the clique complex:
\begin{lemma}\label{lem:incl_isomorphism}
Let $G$ be a graph in which $e$ is a dominated edge. Then 
the inclusion $\flag(G\setminus e)\subset \flag(G)$ induces an isomorphism
$H_p(\flag(G\setminus e))\to H_p(\flag(G))$ of the homology groups
for every $p\geq 0$.
\end{lemma}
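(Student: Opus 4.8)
The plan is to reduce the statement to a single geometric observation~-- that \emph{the link of $e$ in $\flag(G)$ is contractible}~-- and then feed this into a Mayer--Vietoris argument. Write $K:=\flag(G)$ and $K':=\flag(G\setminus e)$, let $e=\{u,v\}$, and let $w$ be a vertex dominating $e$. The first step is to identify exactly which simplices are removed: a simplex of $K$ fails to lie in $K'$ precisely when it contains both $u$ and $v$, and every such simplex has the form $\{u,v\}\cup\tau$ with $\tau$ a clique contained in the edge-neighborhood $N_G(e)$. In the language of stars and links this says $K=K'\cup\overline{\mathrm{st}}(e)$, where $\overline{\mathrm{st}}(e)$ is the closed star of $e$, and that the link $\mathrm{lk}(e)$ is the clique complex of the subgraph of $G$ induced by $N_G(e)$.

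The domination hypothesis enters next, and this is really the only place it is used. By definition $w\in N_G(e)$ is adjacent to every other vertex of $N_G(e)$, so in the clique complex of the induced subgraph on $N_G(e)$ every clique may be enlarged by $w$; equivalently, $\mathrm{lk}(e)$ is a cone with apex $w$, hence contractible. I would then record the two join decompositions $\overline{\mathrm{st}}(e)=\overline{e}\ast\mathrm{lk}(e)$ and $A:=K'\cap\overline{\mathrm{st}}(e)=(\partial\overline{e})\ast\mathrm{lk}(e)$, where $\overline{e}$ is the full $1$-simplex on $\{u,v\}$ and $\partial\overline{e}$ is its boundary, the two vertices $u,v$ without the now-removed edge between them. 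Joining with the full simplex $\overline{e}$ produces a cone, so $\overline{\mathrm{st}}(e)$ is contractible; and joining the two-point space $\partial\overline{e}$ with the contractible complex $\mathrm{lk}(e)$ yields its suspension, so $A$ is contractible as well.

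With these contractibility facts in hand, the conclusion follows from the reduced Mayer--Vietoris sequence of the decomposition $K=K'\cup\overline{\mathrm{st}}(e)$ with intersection $A$. Since $\tilde H_\ast(\overline{\mathrm{st}}(e))=0$ and $\tilde H_\ast(A)=0$, the sequence collapses to isomorphisms $\tilde H_p(K')\xrightarrow{\ \cong\ }\tilde H_p(K)$ for every $p$, and because the relevant Mayer--Vietoris map is induced by the inclusions of the two pieces (the closed-star summand contributing nothing), this isomorphism is exactly the map induced by $\flag(G\setminus e)\hookrightarrow\flag(G)$. Passing from reduced to ordinary homology then gives the claim. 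Equivalently, one can phrase the same computation as a collapse: the simplices of $K\setminus K'$ are perfectly paired by $\sigma\mapsto\sigma\mathbin{\triangle}\{w\}$, an acyclic matching, so $K$ simplicially collapses onto $K'$; this is closer to the viewpoint of the prior work we build on.

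I expect the only real difficulty to be bookkeeping rather than a genuine obstacle. One must verify the join decompositions of $\overline{\mathrm{st}}(e)$ and of the intersection $A$, checking that each listed subset is actually a clique of the relevant graph~-- which uses that $u$, $v$, and the apex $w$ are all adjacent to every vertex of $N_G(e)$~-- and one must be slightly careful in low degrees. Working with reduced homology throughout is precisely what keeps the argument uniform across all $p$ and guarantees that the resulting isomorphism is the inclusion-induced one demanded by the statement.
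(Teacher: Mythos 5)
Your proof is correct, and while it rests on the same key observation as the paper's, it reaches the conclusion by a genuinely different mechanism. Both arguments hinge on the fact that the link of $e$ in $\flag(G)$ is the clique complex of the subgraph induced by $N_G(e)$ and hence a simplicial cone with apex the dominating vertex $w$. From there the paper simply cites known lemmas asserting that this cone structure yields a sequence of elementary simple collapses from $\flag(G)$ to $\flag(G\setminus e)$, and concludes via the fact that collapses are strong deformation retractions. You instead decompose $\flag(G)$ as the union of $\flag(G\setminus e)$ with the closed star $\overline{e}\ast\mathrm{lk}(e)$, check that this star (a cone) and the intersection $\partial\overline{e}\ast\mathrm{lk}(e)$ (the suspension of a contractible complex) both have vanishing reduced homology, and run Mayer--Vietoris. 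This is fully self-contained, avoids any appeal to the collapsibility literature, and directly produces the inclusion-induced isomorphism on homology; the trade-off is that it works only at the level of homology, whereas the paper's route yields the stronger intermediate fact that the inclusion is a homotopy equivalence. Your closing remark that the removed simplices are perfectly matched by $\sigma\mapsto\sigma\,\triangle\,\{w\}$ is precisely the content of the lemmas the paper cites, so your two phrasings together also recover the paper's argument. The hypotheses you use implicitly---that $w\in N_G(e)$, so the link is nonempty and the suspension argument applies, and that $u$, $v$, $w$ are each adjacent to all of $N_G(e)$, so the join decompositions consist of genuine cliques---are guaranteed by the paper's definition of domination, so there is no gap.
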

\begin{proof}
If $e$ is dominated, it means that the subgraph induced
by its edge neighbors has a dominating vertex $v$. 
In topological language, the induced subgraph is the \emph{link}
of the edge in $\flag(G)$, and this link is a \emph{simplicial cone}
with apex $v$. In this case, it is known that there is a sequence
of \emph{elementary simple collapses} that transforms $\flag(G)$
into $\flag(G\setminus e)$~\cite[Lemma 2.7]{Welker}\cite[Lemma 8]{AttaliLieutierExtendedCol}. Since such simple collapses
define a strong deformation retraction, the inclusion map
induces an isomorphism in homology for every $p\geq 0$.
\end{proof}

Now let $G$ be a bifiltered graph and let $e$ be filtration-dominated in $G$.
For every grade $u\in\bR^{2}$, the inclusion $(\flag(G\setminus e))_{u}\subseteq (\flag(G))_{u}$
induces an isomorphism in homology. If $e$ is not in $G$, this is trivial
since $G\setminus e=G$, and otherwise, it follows from the above lemma (Lemma~\ref{lem:incl_isomorphism}) because
$e$ is dominated by definition. Hence we have an isomorphism $\phi_u$ for every
grade $u$. Moreover, the fact that these isomorphisms are all induced
by inclusion maps is enough to show that these isomorphisms commute
with the linear maps of the persistence modules of $\flag(G)$
and $\flag(G\setminus e)$. This concludes the proof of Theorem~\ref{thm:isomorphic}.

\section{Algorithms}\label{sec:algorithms}
We explain next how to decide (strong) filtration-domination of an edge
in a bifiltered graph. Recall that we write $(x_1,y_1)\leq (x_2,y_2)$
for points in $\bR^2$ if $x_1\leq x_2$ and $y_1\leq y_2$. Moreover, for two arbitrary points $(x_1,y_1)$ and $(x_2,y_2)$,
we define $(x_1,y_1)\ast (x_2,y_2):= (\max(x_1,x_2),\max(y_1,y_2))$ as their \deff{join}
which is equivalently defined as the smallest element $(a,b)$ with respect to $\leq$ such that $(x_1,y_1)\leq (a,b)$ and $(x_2,y_2)\leq (a,b)$.
We also consider the join of more than $2$ elements, which is defined
analogously by the coordinate-wise maximum.

\subparagraph{1-critical bifiltered graphs.}
We focus on bifiltered graphs $G=(V,E)$
with the following property: for every edge $e$, there is a unique \deff{critical grade}, denoted by $\crit(e)$, in $\bR^2$ such that $e\in G_{s,t}$ if and only if
$\crit(e)\leq (s,t)$. In other words, for every edge, there is a unique
grade on which the edge ``enters'' the bifiltered graph, and hence, such
bifiltered graphs are called \deff{$1$-critical}. This is a loss of generality
since not all bifiltered graphs of interest are $1$-critical.
Nevertheless, the case of $1$-critical bifiltrations has received attention
in algorithmic contexts (e.g.,~\cite{carlssonCompMultiOneCrit2010,dx-generalized,kerberFastMinimalPresentations2020})
because of its simplicity.
Moreover, we restrict to the $1$-critical case solely for the sake of brevity,
and our approach extends to more general scenarios;
we comment further on this in the conclusion.

Fix a total order on the vertex set $V$ of the graph.
Our input is a $1$-critical bifiltered graph
$G=(V,E)$ in adjacency list representation, that is, we store an array
of linked lists, one for each vertex $v$.
This list consists of pairs $(w,\alpha)$, where $w$ is a vertex adjacent to $v$,
and $\alpha=\crit(\{v,w\})$. We assume the list is sorted by the first
entry in the pair, with respect to the fixed total order on $V$.

This data structure indeed determines a unique bifiltered graph: 
for instance, to compute
$G_{s,t}$, we can iterate through all adjacency lists and select exactly those
edges $e$ for which $\crit(e)\leq (s,t)$. Note that our representation does not
specify critical grades for vertices, and we just assume that every vertex
is present at all grades; this is not important since isolated vertices
do not affect whether an edge is filtration-dominated.

\subparagraph{Strong filtration-domination.}
Fix an edge $e=\{a,b\}$ and a vertex $v$ different from $a$ and $b$.
How can we check whether $v$ strongly filtration-dominates $e$? 
A necessary condition is that $v$ is an edge neighbor of $e$ 
at every grade where $e$ is present; in particular, $v$ must be
an edge neighbor of $e$ at $\crit(e)$. It is easy to verify that this is the case
if and only if $\crit(\{a,v\})\leq\crit(e)$ and $\crit(\{b,v\})\leq\crit(e)$.
We call $v$ a \deff{potential strong dominator} of $e$ in this case.

Furthermore, whenever the edge $e$ acquires an edge neighbor $w\neq v$ in the bifiltered graph, the edge $\{v,w\}$ has to be present in the graph.
This can also be checked efficiently, observing that
$w$ becomes edge neighbor of $e$ at the unique grade
$\crit(\{a,w\})\ast\crit(\{b,w\})\ast\crit(e)$, which we will denote by
$\crit_{e}(w)$, and it only needs
to be checked whether $\crit(\{v,w\})\leq \crit_{e}(w)$.

The last condition is also sufficient for strong filtration-domination
because it ensures that $v$ is connected to all edge neighbors at every grade. This suggests the following algorithm for deciding whether $e$ is strongly
filtration-dominated by some vertex: first determine the potential strong dominators
by iterating over the adjacency lists of $a$ and $b$ once, identifying
the common neighbors, and checking the above criterion for potential strong domination.
Then, we scan the adjacency lists of $a$, $b$, and all potential strong dominators
simultaneously in order, and check for every edge neighbor $w$ of $e$
and every potential strong dominator $v$ the second condition from above.
Because all adjacency lists are sorted according to the same total order,
this algorithm requires only one scan through every adjacency list
of a potential strong dominator. This implies a total running time to check for the
strong filtration-domination
of a fixed edge $e$ of
\[O(\deg(a)+\deg(b)+\sum_{v\in N_G(e)} \deg (v))\]
which is both bounded by $O(|E|)$ and $O(k^2)$, where $k$
is the maximal degree in the graph.

\subparagraph{Regions of non-domination.}
In order to decide filtration-domination efficiently, we extend the above
algorithm for strong filtration-domination in the following way: for
a fixed edge $e=\{a,b\}$ and a vertex $v$, we compute a data structure that can
answer the following query efficiently: Given a grade $(s,t)$,
is $e$ dominated by $v$ in the graph $G_{s,t}$?

For notational convenience, define for two grades $p,q\in\bR^2$
\[
\Delta(p,q):=\{r\in\bR^2:p\leq r\wedge q\not\leq r\}.
\]
The set $\Delta(p,q)$ can be visualized as the difference between two
upper-right quadrants in the plane, one anchored at $p$ and
one at $q$.
Note that if $q\leq p$, $\Delta(p,q)=\varnothing$. Generally, $\Delta(p,q)$
is the union of two stripes, one \deff{horizontal} and one \deff{vertical}
(where one or both stripes can be empty). See~\cref{fig:regions}, which
illustrates this and the next paragraph.

Reviewing the algorithm for the strong case, there are two reasons why
$v$ does not dominate $e$ in $G_{s,t}$: Firstly, $v$ might not be an edge neighbor of $e$ in $G_{s,t}$ which happens in the region $\Delta(\crit(e),\crit(\{a,v\})\ast\crit(\{b,v\}))$.
Secondly, we might have an edge neighbor $w$ of $e$ that is not adjacent to $v$, which happens in the region
$\Delta(\crit_{e}(w), \crit(\{v, w\})) = \Delta(\crit(e)\ast\crit(\{a,w\})\ast\crit(\{b,w\}), \crit(\{v,w\}))$.
The union of all $\Delta$-regions is precisely the set of grades $(s,t)$
for which $v$ does not dominate $e$ in $G_{s,t}$, and we call it the
\deff{region of non-domination} of $v$.

\begin{figure}[htbp] \centering
  \includegraphics[width=\linewidth]{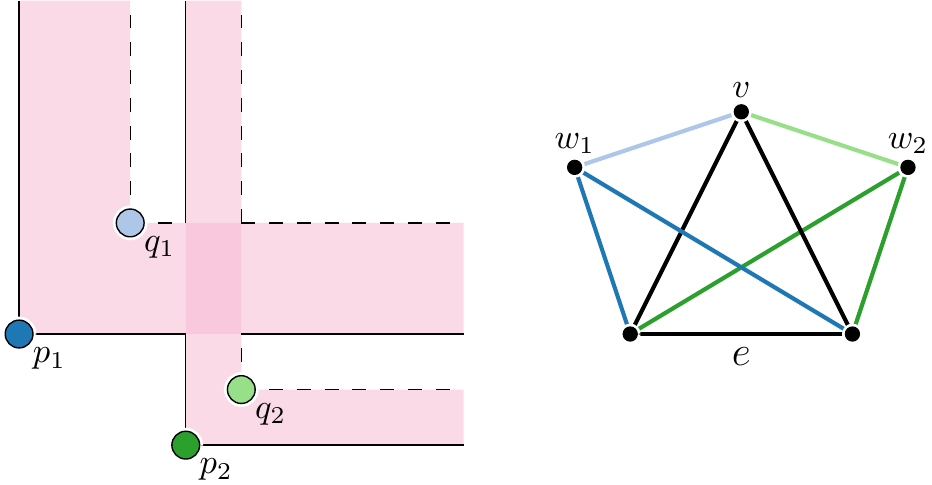}
  \caption{On the left, region of non-domination of the vertex $v$ when
    checking $e$ for filtration-domination on the graph of the right. The edge
    neighborhood of $e$ is $\Set{w_{1}, w_{2}, v}$. The grade in which $w_{1}$
    enters the edge neighborhood of $e$ is labeled $p_{1}$, and the
    critical grade of the edge between $v$ and $w_{1}$ is labeled $q_{1}$.
    Analogously for $w_{2}$ and $p_{2}$, and $q_{2}$. Then $v$ does not dominate $e$ in the
    region shaded in light pink, which is precisely
    $\Delta(p_{1}, q_{1})\cup\Delta(p_{2}, q_{2})$.}\label{fig:regions}
\end{figure}

Clearly,
in the same running time as for checking strong filtration-domination, we can
compute all $\Delta$-regions for all edge neighbors of $e$. Let $r$ denote the
number of edge neighbors of $e$. Then, $r+1$ is an upper bound for the number of $\Delta$-regions for any $v$. Given a query $(s,t)\in\bR^{2}$, we could now answer whether $v$ dominates $e$ in $G_{s,t}$ in $O(r)$ time, just
by checking whether $(s,t)$ is contained in any of the $\Delta$-regions.

We can reduce the query time to
$O(\log r)$ with $O(r\log r)$ preprocessing and $O(r)$ space, as we now briefly
describe. Recall that every $\Delta$-region is the union of a vertical stripe
and a horizontal stripe. Each vertical stripe is represented by a half-open
interval determining its $x$-range, and the value of the $y$-coordinate of the
segment that bounds it from below. Sweeping through all the interval endpoints
from left to right, we can easily obtain in $O(r\log r)$ time an ordered
sequence of $O(r)$ interior-disjoint vertical stripes whose union is precisely the union
of the original vertical stripes. Given a query grade in $\bR^{2}$, we can check
if it is contained in this disjoint union of vertical stripes in $O(\log r)$ time
via binary search. An analogous procedure can be used for the horizontal stripes,
yielding our desired bounds.

\subparagraph{Filtration-domination.}
The idea to decide filtration-domination is to use the regions of non-domination described above to check whether at least one edge neighbor
dominates $e$ at every grade in $\bR^{2}$. Naturally, as we cannot check all
grades,
we need to reduce the number of queries to a finite set.
It is necessary, but not sufficient,
to check for domination of $e$ at the grade $\crit_e(w)$ only;
however, it suffices to check at all \emph{joins} of such critical grades:
\begin{lemma}\label{lem:crit_joins}
With the notation as before, define
\[
C:=\Set{\crit(e)}\cup\Set{\crit_e(w_1)\ast\crit_e(w_2)\given w_1,w_2\in N_G(e)}
\]
(note that $w_1=w_2$ is allowed). Then $e$ is dominated for every grade
in $C$ if and only if it is filtration-dominated.
\end{lemma}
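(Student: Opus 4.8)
The plan is to prove both implications, treating the forward direction as essentially immediate and concentrating all the work on the reverse one. First I would observe that every grade in $C$ lies above $\crit(e)$: we have $\crit_{e}(w)\geq\crit(e)$ for each $w$ by definition of the join, so any element $\crit_{e}(w_{1})\ast\crit_{e}(w_{2})$ of $C$ is $\geq\crit(e)$, and $e$ is present in the corresponding subgraph. Consequently, if $e$ is filtration-dominated, then by definition it is dominated at each grade of $C$, which settles this direction at once.

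For the converse I would argue the contrapositive: assuming $e$ is not filtration-dominated, I would exhibit a grade of $C$ at which $e$ is not dominated. By definition there is some grade $b\geq\crit(e)$ at which $e$ is present but not dominated in $G_{b}$. Let $N(b)$ denote the set of edge neighbors of $e$ present at $b$, that is, those $w\in N_{G}(e)$ with $\crit_{e}(w)\leq b$; these are the only candidate dominators at $b$. The idea is to replace $b$ by a canonical grade assembled purely from critical values, without inadvertently creating a dominator.

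The central step would be to set $j:=\ast_{w\in N(b)}\crit_{e}(w)$, the join of the entry grades of the edge neighbors present at $b$. Since each $\crit_{e}(w)\leq b$ we get $j\leq b$, and since simultaneously $j\geq\crit_{e}(w)$ for every $w\in N(b)$, the edge-neighbor set is unchanged, $N(j)=N(b)$. As $j\leq b$, the subgraph $G_{j}$ carries a subset of the edges of $G_{b}$ among these vertices, and domination can only become harder after deleting edges; so I would check that every candidate that failed to dominate at $b$ still fails at $j$ (if $\crit(\{v,w\})\not\leq b$ then $\crit(\{v,w\})\not\leq j$, with $w\in N(b)=N(j)$ still a genuine witness), concluding that $e$ is not dominated at $j$. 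It then remains to see that $j\in C$, and here I would invoke the two-dimensionality of the grading: because $\ast$ is the coordinatewise maximum, the join of any nonempty finite set of points of $\bR^{2}$ equals the join of just two of them, namely a point attaining the maximal first coordinate and a point attaining the maximal second coordinate. Picking these two among $\Set{\crit_{e}(w)\given w\in N(b)}$ writes $j=\crit_{e}(w_{1})\ast\crit_{e}(w_{2})\in C$ (with $w_{1}=w_{2}$ permitted when $|N(b)|=1$). The degenerate case $N(b)=\varnothing$ I would dispatch separately: then $e$ has no edge neighbor at $b$, hence none at $\crit(e)$ either by monotonicity, so $e$ is not dominated at $\crit(e)\in C$.

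I expect the only real obstacle to be this last collapse from a join of many critical grades to a join of a pair. The non-domination-preserving discretization naturally lands on the join of the entry grades of \emph{all} edge neighbors present at $b$, which is an $|N(b)|$-fold join; it is precisely the fact that $\bR^{2}$ is two-dimensional that forces such a join to coincide with a join of two of its terms, matching the definition of $C$. Everything else, namely the monotonicity of the edge-neighbor set and of domination under edge deletion, is routine, and the same argument in $\bR^{d}$ would instead require joins of up to $d$ critical grades.
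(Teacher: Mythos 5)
Your proposal is correct and follows essentially the same route as the paper's proof: reduce an arbitrary grade $b$ to a grade of $C$ below it with the same edge-neighbor set (you via the join of all $\crit_e(w)$, $w\in N(b)$, collapsed to a two-fold join using the coordinatewise-max structure of $\bR^2$; the paper via an equivalent left-then-down sweep), and then transfer (non-)domination using the fact that a dominating vertex of the induced neighborhood subgraph survives the addition of edges. Your contrapositive phrasing and explicit handling of the $N(b)=\varnothing$ case are only cosmetic differences.
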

\begin{proof}
  If $e$ is filtration-dominated, it is dominated at every grade by definition.
  For the other direction, fix some grade $q$. Let $V_q$ denote the edge
  neighbors of $e$ in $G_q$; these are all edge neighbors $w$ of $e$ in $G$ such
  that $\crit_e(w)\leq q$. The crucial property is now that there exists some
  grade $c\in C$ with $c\leq q$ and $V_c=V_q$. To see that, note that we can
  move $q$ horizontally to the left without changing $V_q$ until we either hit
  the $x$-coordinate of some $\crit_e(w)$, or of $\crit(e)$. Then, we can move
  $q$ vertically down until one of the same events happens. After these two
  transformations, we end up at a join of the critical grades of two edge
  neighbors, or at the critical grade of $e$ itself.

  With $c$ as above, we define $G^{(e)}_c$ as the subgraph of $G_c$ induced by
  $V_c$, and likewise $G^{(e)}_q$ as the subgraph of $G_q$ induced by $V_q$.
  Since $V_c=V_q$, both graphs have the same vertex set, and since $G_c$ is a
  subgraph of $G_q$, $G^{(e)}_c$ is a subgraph of $G^{(e)}_q$. By assumption,
  $e$ is dominated in $G_c$, meaning that $G^{(e)}_c$ has a dominating vertex.
  Because a vertex in a graph remains dominating when adding edges to the graph,
  also $G^{(e)}_q$ has a dominating vertex, so $e$ is dominated in $G_q$.
\end{proof}

With that lemma, the algorithm for deciding filtration-domination of $e$ is as
follows: For every edge neighbor $w$ of $e$, prepare the data structure
to check for domination of $e$ by $w$ at a query grade, as described
in the previous paragraph. Then, compute the set $C$ as in the lemma, and query
each data structure for every point in $C$. Return whether for all $c\in C$, some
edge neighbor dominates $e$.

For the running time, let $r$ again denote the number of edge neighbors of $e$, and let $m$ be the number of edges.
We determine the $\Delta$-regions in $O(m)$ time and construct the $r$ 
data structures in total time $O(r^2\log r)$. The set $C$ consists of $O(r^2)$
elements, and for each of them, we need to query $r$ data structures, 
incurring a total cost of $O(r^3\log r)$. Hence, in total, we arrive at a runtime of
$O(m+r^3\log r)$ for deciding whether a fixed edge $e$ is filtration-dominated.

\section{Experimental Results}\label{sec:experiments}

\newcommand{\dataset}[1]{\texttt{#1}}

\subparagraph{Experimental setup.} We have implemented the algorithms in the
Rust programming language, and compiled them with Rust 1.60 and the highest
optimization levels. The code\footnote{
  \url{https://github.com/aj-alonso/filtration_domination}}, which includes all
code used to run the experiments and generate the tables, and all tested datasets are
publicly available. We ran all the experiments in a workstation with an Intel
Xeon E5-1650v3 CPU and 64 GB of RAM, running Ubuntu 20.04.4 LTS\@.

\subparagraph{Datasets.} We use real-world and synthetic
datasets. The real-world datasets are \dataset{netw-sc}, \dataset{senate},
\dataset{eleg}, \dataset{dragon}, and \dataset{hiv}, as collected and described
in~\cite{otterRoadmapComputationPersistent2017}. The synthetic datasets are
\dataset{sphere}, which is points sampled from the $2$-sphere in $\bR^{3}$ with outliers in the cube
$[-2, 2]^{3}$, as in~\cite{kerberFastMinimalPresentations2020},
\dataset{uniform}, points sampled uniformly at random in $[0,1]^{2}$,
\dataset{circle}, points sampled from a circle in the plane,
\dataset{torus}, points sampled from a torus in $\bR^{3}$, and
\dataset{swiss-roll},
a two-dimensional plane rolled up into a spiral in $\bR^{3}$ (see, for
example,~\cite{roweisNonlinearDimensionalityReduction2000}).

In each dataset, we assign a density for every point using 
the Gaussian kernel density estimation function~\cite{silvermanDensityEstimationStatistics1986}, 
with bandwidth parameter set to the 20th percentile of all distinct distances.
Then, we consider the complete graph over the point cloud,
bifiltered with respect to edge length and density, as depicted in \cref{fig:circles}.

\subparagraph{Comparing orders.}\label{sec:comparison_orders}
We check every edge for (strong) filtration-domination once in the algorithm, 
but the order is arbitrary. We investigate the effect of choosing different orders.
The grades are partially ordered by $\leq$, and there are four natural ways to
complete $\leq$ to a total order: lexicographic, colexicographic, 
reverse lexicographic and reverse colexicographic. In addition, we also tested a
random total order.

We run our algorithm once for each dataset and order. A comparison of the
results is shown in \cref{tab:order}. The reverse lexicographic and reverse
colexicographic result in a greater number of removed edges in all cases, and successfully
remove more than 90\% of the edges in almost all cases. 
This can be explained by the fact that late edges (with respect to $\leq$) in the
bifiltered graph tend to be
filtration-dominated, and removing them decreases the number of edge neighbors of earlier edges,
increasing their chances for being filtration-dominated. We also observe that more removals
naturally implies faster runs.
Note that for unifiltered graphs, recent work~\cite{glisseSwapShiftTrim2022}
also demonstrates the advantages of handling the edges in reverse order.

Among these two reverse orders, the reverse lexicographic
order is usually a bit better, and so it is the one we use in the rest of the
experiments.
\begin{table*}[!h]

\caption{\label{tab:order}Comparison of the edges removed when using different orders.
      For each dataset and order, we show the percentage of removed edges after a single run of the filtration-domination removal algorithm.
      The cases where the algorithm took more than 2 hours are marked with an ``---''.}
\centering
\resizebox{\linewidth}{!}{
\begin{tabular}[t]{lrrrrrrrrrr}
\toprule
\multicolumn{1}{c}{ } & \multicolumn{10}{c}{Datasets} \\
\cmidrule(l{3pt}r{3pt}){2-11}
Order & senate & eleg & netwsc & hiv & dragon & sphere & uniform & circle & torus & swiss roll\\
\midrule
\cellcolor{gray!6}{Random} & \cellcolor{gray!6}{48.2\%} & \cellcolor{gray!6}{55.4\%} & \cellcolor{gray!6}{27.2\%} & \cellcolor{gray!6}{---} & \cellcolor{gray!6}{---} & \cellcolor{gray!6}{23.0\%} & \cellcolor{gray!6}{45.2\%} & \cellcolor{gray!6}{9.6\%} & \cellcolor{gray!6}{54.6\%} & \cellcolor{gray!6}{60.0\%}\\
Colexicographic & 59.8\% & 87.0\% & 65.4\% & --- & --- & 21.2\% & 26.8\% & 17.4\% & 59.0\% & 50.8\%\\
\cellcolor{gray!6}{Lexicographic} & \cellcolor{gray!6}{61.2\%} & \cellcolor{gray!6}{92.6\%} & \cellcolor{gray!6}{96.6\%} & \cellcolor{gray!6}{---} & \cellcolor{gray!6}{---} & \cellcolor{gray!6}{21.2\%} & \cellcolor{gray!6}{68.8\%} & \cellcolor{gray!6}{17.4\%} & \cellcolor{gray!6}{72.8\%} & \cellcolor{gray!6}{65.6\%}\\
Reverse colex. & 90.4\% & 97.4\% & 99.4\% & 97.0\% & 97.6\% & \textbf{25.2\%} & 98.6\% & \textbf{27.4\%} & 92.2\% & 89.8\%\\
\cellcolor{gray!6}{Reverse lex.} & \cellcolor{gray!6}{\textbf{91.0\%}} & \cellcolor{gray!6}{\textbf{97.6\%}} & \cellcolor{gray!6}{\textbf{99.4\%}} & \cellcolor{gray!6}{\textbf{98.6\%}} & \cellcolor{gray!6}{\textbf{98.6\%}} & \cellcolor{gray!6}{25.0\%} & \cellcolor{gray!6}{\textbf{98.6\%}} & \cellcolor{gray!6}{24.2\%} & \cellcolor{gray!6}{\textbf{93.6\%}} & \cellcolor{gray!6}{\textbf{94.2\%}}\\
\bottomrule
\end{tabular}}
\end{table*}

\subparagraph{Performance.}\label{sec:comparison_removal}

We compare the filtration-domination and strong filtration-domination removal algorithms in~\cref{tab:removals}.
We observe that the number of remaining edges is smaller in the former variant as expected, but the ratio
is never much more than a factor of $2$, and sometimes close to $1$. We also see that the performance
of the latter variant is better (also expected), and the difference is sometimes more than an order of magnitude.
We conclude that the strong version seems the better choice in general, although the non-strong version might
be useful if subsequent computations scale very badly with the number of edges.

For comparison, we also run the state-of-the-art single-parameter
algorithm\footnote{As implemented in GUDHI~\cite{gudhi} version 3.6.0} of~\cite{glisseSwapShiftTrim2022} on the single-parameter
filtrations derived from our datasets (by dropping the density parameter).
Interestingly, we observe that our algorithm often returns smaller output graphs,
despite taking the density parameter into account and therefore being more selective
for removing edges. This raises the question whether our methodology might also help
to further improve the single-parameter case.

\begin{table*}

\caption{\label{tab:removals}Performance evaluation. The first two columns describe the datasets.
      Each group of columns contains three subcolumns: ``After'', the number of remaining edges after running the corresponding removal algorithm, ``\%'', the percentage of remaining edges, and ``Time (s)'', the time taken in seconds.}
\centering
\begin{tabular}[t]{lrrrrrrrrrr}
\toprule
\multicolumn{2}{c}{ } & \multicolumn{3}{c}{Filtration-domination} & \multicolumn{3}{c}{Strong filtration-domination} & \multicolumn{3}{c}{Single-parameter} \\
\cmidrule(l{3pt}r{3pt}){3-5} \cmidrule(l{3pt}r{3pt}){6-8} \cmidrule(l{3pt}r{3pt}){9-11}
Datasets & Before & After & \% & Time (s) & After & \% & Time (s) & After & \% & Time (s)\\
\midrule
\cellcolor{gray!6}{senate} & \cellcolor{gray!6}{5253} & \cellcolor{gray!6}{476} & \cellcolor{gray!6}{9.1\%} & \cellcolor{gray!6}{0.23} & \cellcolor{gray!6}{1101} & \cellcolor{gray!6}{21.0\%} & \cellcolor{gray!6}{0.03} & \cellcolor{gray!6}{242} & \cellcolor{gray!6}{4.6\%} & \cellcolor{gray!6}{0.01}\\
eleg & 43956 & 1026 & 2.3\% & 0.48 & 1254 & 2.9\% & 0.25 & 1224 & 2.8\% & 0.09\\
\cellcolor{gray!6}{netwsc} & \cellcolor{gray!6}{71631} & \cellcolor{gray!6}{424} & \cellcolor{gray!6}{0.6\%} & \cellcolor{gray!6}{0.42} & \cellcolor{gray!6}{426} & \cellcolor{gray!6}{0.6\%} & \cellcolor{gray!6}{0.37} & \cellcolor{gray!6}{476} & \cellcolor{gray!6}{0.7\%} & \cellcolor{gray!6}{0.13}\\
hiv & 591328 & 7820 & 1.3\% & 269.00 & 18942 & 3.2\% & 35.79 & 18648 & 3.2\% & 6.33\\
\cellcolor{gray!6}{dragon} & \cellcolor{gray!6}{1999000} & \cellcolor{gray!6}{29893} & \cellcolor{gray!6}{1.5\%} & \cellcolor{gray!6}{225.22} & \cellcolor{gray!6}{45514} & \cellcolor{gray!6}{2.3\%} & \cellcolor{gray!6}{66.79} & \cellcolor{gray!6}{53503} & \cellcolor{gray!6}{2.7\%} & \cellcolor{gray!6}{24.52}\\
\addlinespace
sphere & 4950 & 3710 & 74.9\% & 4.25 & 3800 & 76.8\% & 0.07 & 3752 & 75.8\% & 0.02\\
\cellcolor{gray!6}{uniform} & \cellcolor{gray!6}{79800} & \cellcolor{gray!6}{1123} & \cellcolor{gray!6}{1.4\%} & \cellcolor{gray!6}{0.99} & \cellcolor{gray!6}{1143} & \cellcolor{gray!6}{1.4\%} & \cellcolor{gray!6}{0.54} & \cellcolor{gray!6}{1138} & \cellcolor{gray!6}{1.4\%} & \cellcolor{gray!6}{0.21}\\
circle & 4950 & 3748 & 75.7\% & 4.76 & 4439 & 89.7\% & 0.08 & 4169 & 84.2\% & 0.03\\
\cellcolor{gray!6}{torus} & \cellcolor{gray!6}{19900} & \cellcolor{gray!6}{1280} & \cellcolor{gray!6}{6.4\%} & \cellcolor{gray!6}{0.30} & \cellcolor{gray!6}{1617} & \cellcolor{gray!6}{8.1\%} & \cellcolor{gray!6}{0.09} & \cellcolor{gray!6}{1618} & \cellcolor{gray!6}{8.1\%} & \cellcolor{gray!6}{0.04}\\
swiss-roll & 19900 & 1135 & 5.7\% & 0.37 & 1577 & 7.9\% & 0.09 & 2279 & 11.5\% & 0.04\\
\bottomrule
\end{tabular}
\end{table*}

\subparagraph{Multiple iterations.}
In the above experiments  we have run our algorithm only once
on each dataset. We can run the algorithm multiple 
times consecutively, by running the same algorithm on the output of the first run, and so on, possibly removing more edges each time. We test this assumption in this
set of experiments. For each dataset we have run the strong
filtration-domination removal algorithm 5 times, and the results are shown
in~\cref{tab:iterations}. The bulk
of the removals happens in the first iteration: the second and subsequent
iterations remove less than 3\% of the original edges, except in the
\texttt{senate} dataset, where the second iteration removes 6\% of the original
edges. We note that subsequent iterations take a fraction of the running time of
the first iteration, because they are run on smaller input.

\begin{table*}[!h]

\caption{\label{tab:iterations}Results after running the strong filtration-domination removal algorithm 5 consecutive times.
      There are 5 groups of columns, one for each iteration.
      The ``Removed'' column displays the percentage of the original edges removed in the corresponding iteration,
      and ``Time (s)'' displays the running time (in seconds) of the iteration.}
\centering
\resizebox{\linewidth}{!}{
\begin{tabular}[t]{lrrrrrrrrrr}
\toprule
\multicolumn{1}{c}{ } & \multicolumn{2}{c}{Iteration 1} & \multicolumn{2}{c}{Iteration 2} & \multicolumn{2}{c}{Iteration 3} & \multicolumn{2}{c}{Iteration 4} & \multicolumn{2}{c}{Iteration 5} \\
\cmidrule(l{3pt}r{3pt}){2-3} \cmidrule(l{3pt}r{3pt}){4-5} \cmidrule(l{3pt}r{3pt}){6-7} \cmidrule(l{3pt}r{3pt}){8-9} \cmidrule(l{3pt}r{3pt}){10-11}
Dataset & Time (s) & Removed & Time (s) & Removed & Time (s) & Removed & Time (s) & Removed & Time (s) & Removed\\
\midrule
\cellcolor{gray!6}{senate} & \cellcolor{gray!6}{0.03} & \cellcolor{gray!6}{79.0\%} & \cellcolor{gray!6}{0.00} & \cellcolor{gray!6}{6.0\%} & \cellcolor{gray!6}{0.00} & \cellcolor{gray!6}{2.8\%} & \cellcolor{gray!6}{0.00} & \cellcolor{gray!6}{0.8\%} & \cellcolor{gray!6}{0.01} & \cellcolor{gray!6}{0.2\%}\\
eleg & 0.26 & 97.2\% & 0.00 & 0.4\% & 0.01 & 0.0\% & 0.00 & 0.0\% & 0.00 & 0.0\%\\
\cellcolor{gray!6}{netwsc} & \cellcolor{gray!6}{0.37} & \cellcolor{gray!6}{99.4\%} & \cellcolor{gray!6}{0.00} & \cellcolor{gray!6}{0.0\%} & \cellcolor{gray!6}{0.00} & \cellcolor{gray!6}{0.0\%} & \cellcolor{gray!6}{0.00} & \cellcolor{gray!6}{0.0\%} & \cellcolor{gray!6}{0.00} & \cellcolor{gray!6}{0.0\%}\\
hiv & 35.98 & 96.8\% & 0.30 & 1.4\% & 0.12 & 0.6\% & 0.07 & 0.2\% & 0.06 & 0.2\%\\
\cellcolor{gray!6}{dragon} & \cellcolor{gray!6}{67.12} & \cellcolor{gray!6}{97.8\%} & \cellcolor{gray!6}{0.24} & \cellcolor{gray!6}{0.6\%} & \cellcolor{gray!6}{0.14} & \cellcolor{gray!6}{0.2\%} & \cellcolor{gray!6}{0.11} & \cellcolor{gray!6}{0.0\%} & \cellcolor{gray!6}{0.09} & \cellcolor{gray!6}{0.0\%}\\
\addlinespace
sphere & 0.06 & 23.2\% & 0.06 & 0.4\% & 0.05 & 0.0\% & 0.05 & 0.0\% & 0.05 & 0.0\%\\
\cellcolor{gray!6}{uniform} & \cellcolor{gray!6}{0.55} & \cellcolor{gray!6}{98.6\%} & \cellcolor{gray!6}{0.00} & \cellcolor{gray!6}{0.0\%} & \cellcolor{gray!6}{0.00} & \cellcolor{gray!6}{0.0\%} & \cellcolor{gray!6}{0.00} & \cellcolor{gray!6}{0.0\%} & \cellcolor{gray!6}{0.00} & \cellcolor{gray!6}{0.0\%}\\
circle & 0.07 & 10.4\% & 0.04 & 2.4\% & 0.04 & 1.2\% & 0.04 & 1.0\% & 0.04 & 1.6\%\\
\cellcolor{gray!6}{torus} & \cellcolor{gray!6}{0.09} & \cellcolor{gray!6}{91.8\%} & \cellcolor{gray!6}{0.00} & \cellcolor{gray!6}{1.6\%} & \cellcolor{gray!6}{0.01} & \cellcolor{gray!6}{0.4\%} & \cellcolor{gray!6}{0.00} & \cellcolor{gray!6}{0.0\%} & \cellcolor{gray!6}{0.00} & \cellcolor{gray!6}{0.0\%}\\
swiss-roll & 0.10 & 92.0\% & 0.00 & 2.6\% & 0.00 & 0.8\% & 0.00 & 0.2\% & 0.00 & 0.0\%\\
\bottomrule
\end{tabular}}
\end{table*}

\subparagraph{Different structure on the grades.}
Filtration-domination of an edge in a bifiltered graph depends on both the structure of the
underlying graph (the edge needs to be dominated), and the structure of the
grades (it should be dominated at each grade it is present). We now test how
the lack of structure on the grades affects the number of strongly
filtration-dominated edges we remove. We modify the densities of the described datasets
in two different ways: we zero out the
density parameter (effectively making the graph unifiltered),
and replace the densities by values sampled uniformly at
random. We then run the strong filtration-domination removal algorithm, and show
the results in~\cref{tab:random_densities}. For each case, we also count the
number of edges that are non-dominated in the subgraph of their respective critical grade:
these edges cannot be removed right away, and we refer to them by being ``free at birth''
in~\cref{tab:random_densities}. When we zero out the densities, in almost all
datasets there are less than 10\% of such edges. On the other hand, when using
random density values there are more than 80\% such edges in almost all datasets.
In a way, this measures how the lack of structure on the grades affects the
edges that can be removed, which is reflected upon the output: when using random
density values we remove less than 2\% of the edges in all
cases.

\begin{table*}[!h]

\caption{\label{tab:random_densities}Analysis of the removed edges under
      changes to the structure of the grades. There are three groups of columns:
      the first one represents the original dataset with no modification to the densities,
      in the second one we artificially zero out all the density values, and in the third one we
      replace the densities by random values sampled uniformly. ``Free at birth'' shows the
      percentage of edges that are not dominated when they appear (at their
      critical grade), and ``Removed'' is the percentage of edges removed after
      running our strong filtration-domination removal algorithm.}
\centering
\begin{tabular}[t]{lrrrrrr}
\toprule
\multicolumn{1}{c}{ } & \multicolumn{2}{c}{Original densities} & \multicolumn{2}{c}{Zeroed densities} & \multicolumn{2}{c}{Random densities} \\
\cmidrule(l{3pt}r{3pt}){2-3} \cmidrule(l{3pt}r{3pt}){4-5} \cmidrule(l{3pt}r{3pt}){6-7}
Dataset & Free at birth & Removed & Free at birth & Removed & Free at birth & Removed\\
\midrule
\cellcolor{gray!6}{senate} & \cellcolor{gray!6}{3.4\%} & \cellcolor{gray!6}{79.0\%} & \cellcolor{gray!6}{2.8\%} & \cellcolor{gray!6}{95.0\%} & \cellcolor{gray!6}{71.2\%} & \cellcolor{gray!6}{1.6\%}\\
eleg & 1.2\% & 97.2\% & 1.6\% & 97.0\% & 82.4\% & 1.4\%\\
\cellcolor{gray!6}{netwsc} & \cellcolor{gray!6}{0.2\%} & \cellcolor{gray!6}{99.4\%} & \cellcolor{gray!6}{0.2\%} & \cellcolor{gray!6}{99.4\%} & \cellcolor{gray!6}{84.0\%} & \cellcolor{gray!6}{1.2\%}\\
hiv & 0.6\% & 96.8\% & 1.4\% & 96.2\% & 92.0\% & 0.4\%\\
\cellcolor{gray!6}{dragon} & \cellcolor{gray!6}{0.4\%} & \cellcolor{gray!6}{97.8\%} & \cellcolor{gray!6}{0.6\%} & \cellcolor{gray!6}{97.2\%} & \cellcolor{gray!6}{94.6\%} & \cellcolor{gray!6}{0.2\%}\\
\addlinespace
sphere & 7.8\% & 23.2\% & 16.4\% & 23.8\% & 74.4\% & 0.6\%\\
\cellcolor{gray!6}{uniform} & \cellcolor{gray!6}{0.8\%} & \cellcolor{gray!6}{98.6\%} & \cellcolor{gray!6}{0.8\%} & \cellcolor{gray!6}{98.6\%} & \cellcolor{gray!6}{86.0\%} & \cellcolor{gray!6}{0.8\%}\\
circle & 6.4\% & 10.4\% & 10.0\% & 18.4\% & 67.8\% & 0.2\%\\
\cellcolor{gray!6}{torus} & \cellcolor{gray!6}{2.6\%} & \cellcolor{gray!6}{91.8\%} & \cellcolor{gray!6}{3.0\%} & \cellcolor{gray!6}{91.6\%} & \cellcolor{gray!6}{78.4\%} & \cellcolor{gray!6}{1.2\%}\\
swiss-roll & 2.6\% & 92.0\% & 4.0\% & 88.0\% & 79.4\% & 1.2\%\\
\bottomrule
\end{tabular}
\end{table*}

\subparagraph{Speeding up multiparameter persistent homology.}\label{sec:comparison_mpfree}
We evaluate the impact of our algorithm as a preprocessing step for computing
the minimal presentation of a persistence module in homology dimension $1$ induced by a bifiltered graph.
The standard approach is to first enumerate all triangles of the clique complexes
and then computing the minimal presentation through manipulations of the boundary matrices
of the simplicial complex; see~\cite{kerberFastMinimalPresentations2020,lesnickComputingMinimalPresentations2020}
for details. We suggest to first run our algorithm to obtain a smaller graph, and apply the two steps
above on that smaller graph instead.

For the computation of the minimal presentation we use 
\texttt{mpfree}\footnote{\url{https://bitbucket.org/mkerber/mpfree/src/master/}} (in parallel mode)
which is currently the fastest software available for this task.
As we see in \cref{tab:mpfree}, our preprocessing speeds up the computations
significantly in most examples.

Perhaps even more importantly, our approach also reduces
the memory consumption of the minimal presentation pipeline, which is the limiting factor
for large inputs, as can be seen in the case of
the \texttt{dragon} and \texttt{hiv} datasets, where the pipeline ran out of
memory without first doing preprocessing. In the datasets where the algorithm does not
remove many edges---\texttt{sphere} and \texttt{circle} in this case---the memory
savings are limited. However, in most cases the memory savings are high.
The most extreme case is the \texttt{uniform} dataset, where memory
consumption goes from 5.58 GB to 11.33 MB\@. This difference can be explained by
looking at the size of the input fed into \texttt{mpfree}: for first
dimensional homology, the input consists of all triangles and all edges of the
clique bifiltration. The number of 
edges is reduced from 79800 to 1143, as shown in \cref{tab:removals}, and the number of triangles is reduced from 10586800 to 875.

\begin{table*}[!h]

\caption{\label{tab:mpfree}Impact of our algorithm as a preprocessing step for minimal presentations.
      Inside each group of columns, the ``Build (s)'' column displays the time taken in seconds to build the clique bifiltration, ``mpfree (s)'' the time taken to run \texttt{mpfree},
      and ``Memory'' the maximum amount of memory used by the pipeline, over all the steps (including the preprocessing if applied).
      In addition, the ``Removal (s)'' column displays the time taken to run our algorithm, and ``Speedup'' is the speedup compared to not doing preprocessing. The $\infty$ symbol means
      that the pipeline ran out of memory, and in that
      case both the timing and speedup values are marked with an ``---''.}
\centering
\begin{tabular}[t]{lrrrrrrlr}
\toprule
\multicolumn{1}{c}{ } & \multicolumn{3}{c}{No preprocessing} & \multicolumn{4}{c}{With preprocessing} \\
\cmidrule(l{3pt}r{3pt}){2-4} \cmidrule(l{3pt}r{3pt}){5-8}
Dataset & Memory & Build (s) & mpfree (s) & Memory & Removal (s) & Build (s) & mpfree (s) & Speedup\\
\midrule
\cellcolor{gray!6}{senate} & \cellcolor{gray!6}{87.98 MB} & \cellcolor{gray!6}{0.05} & \cellcolor{gray!6}{0.49} & \cellcolor{gray!6}{8.2 MB} & \cellcolor{gray!6}{0.03} & \cellcolor{gray!6}{0.00} & \cellcolor{gray!6}{0.10} & \cellcolor{gray!6}{4.15}\\
eleg & 2.17 GB & 1.32 & 11.52 & 8.36 MB & 0.29 & 0.00 & 0.03 & 40.13\\
\cellcolor{gray!6}{netwsc} & \cellcolor{gray!6}{4.25 GB} & \cellcolor{gray!6}{3.03} & \cellcolor{gray!6}{21.05} & \cellcolor{gray!6}{11.42 MB} & \cellcolor{gray!6}{0.37} & \cellcolor{gray!6}{0.00} & \cellcolor{gray!6}{0.01} & \cellcolor{gray!6}{63.37}\\
hiv & $\infty$ & --- & --- & 366.25 MB & 35.74 & 0.16 & 2.25 & ---\\
\cellcolor{gray!6}{dragon} & \cellcolor{gray!6}{$\infty$} & \cellcolor{gray!6}{---} & \cellcolor{gray!6}{---} & \cellcolor{gray!6}{382.55 MB} & \cellcolor{gray!6}{67.10} & \cellcolor{gray!6}{0.22} & \cellcolor{gray!6}{4.57} & \cellcolor{gray!6}{---}\\
\addlinespace
sphere & 90.33 MB & 0.05 & 0.50 & 64.07 MB & 0.07 & 0.03 & 0.37 & 1.17\\
\cellcolor{gray!6}{uniform} & \cellcolor{gray!6}{5.58 GB} & \cellcolor{gray!6}{3.67} & \cellcolor{gray!6}{36.99} & \cellcolor{gray!6}{11.33 MB} & \cellcolor{gray!6}{0.56} & \cellcolor{gray!6}{0.00} & \cellcolor{gray!6}{0.02} & \cellcolor{gray!6}{70.10}\\
circle & 95.19 MB & 0.05 & 0.50 & 83.31 MB & 0.08 & 0.04 & 0.42 & 1.02\\
\cellcolor{gray!6}{torus} & \cellcolor{gray!6}{725.54 MB} & \cellcolor{gray!6}{0.41} & \cellcolor{gray!6}{4.35} & \cellcolor{gray!6}{7.81 MB} & \cellcolor{gray!6}{0.10} & \cellcolor{gray!6}{0.00} & \cellcolor{gray!6}{0.03} & \cellcolor{gray!6}{36.62}\\
swiss-roll & 703.13 MB & 0.40 & 4.23 & 7.98 MB & 0.09 & 0.00 & 0.04 & 35.62\\
\bottomrule
\end{tabular}
\end{table*}

\section{Conclusion}
We have shown that with the right design choices,
the concept of filtration-domination can lead to fast computation
of the topological properties of bifiltered clique complexes.
The presented results already demonstrate this for a natural and important
class of datasets. However, our approach can also be adapted in several directions,
and we plan to investigate them in the full version of the paper:

First of all, we can lift the restriction to 
$1$-critical bifiltered graphs, and allow to associate 
an arbitrary (finite) number of critical values to every edge.
Algorithmically, we can either try to adapt our algorithms to handle such
multicritical edges, or we just pretend that there are multiple copies of the same edge
coming in at different grades, and we can remove them independently. 
The question is which of the option is more efficient.
Such a generalization would allow us to extend our experimental evaluation to
other natural types of bifiltrations, for instance 
\emph{degree-Rips bifiltrations}~\cite{blumbergStability2ParameterPersistent2022,lesnickInteractiveVisualization2D2015,rolle-degree}.

The concepts of filtration-domination and strongly
filtration-domination extend to any number of parameters.
This is also true for
the ideas behind our algorithms. In particular, it is straightforward to adapt
the algorithm to check for strong filtration-domination to such more general
cases. The extension to an efficient filtration-domination test requires
some care, as the data structure for the domination check gets more involved in this case.

\bibliography{refs}

\end{document}